\newtheorem{Theorem}{Theorem}
\newtheorem{theorem}{Theorem}
\begin{document}

\title{Distributed Processes and Scalability in Sub-networks of Large-Scale Networks}

\numberofauthors{6} 

\author{
\alignauthor
Abhinav Mishra \\
       \affaddr{Georgia Institute of Technology}\\
       \email{amishra41@gatech.edu}
}

\maketitle
\begin{abstract}
Performance of standard processes over large distributed networks typically scales with the size of the network. For example, in planar topologies where nodes communicate with their natural neighbors, the scaling factor is $O(n)$, where $n$ is the number of nodes. As the size of the network increases, this makes global convergence over the entire network less practical. On the other hand, for several applications, such as load balancing or detection of local events, global convergence may not be necessary or even relevant. We introduce simple distributed iterative processes which limit the scope of the computational task to a smaller subnetwork of the entire network. This is achieved using one additional local parameter which controls the size of the subnetwork. We establish termination and convergence rate of such processes in theory, in experiment, in comparison to the well understood behavior of Markov processes, and for a variety of network topologies and initial conditions.
\end{abstract}

\section{Introduction}

A  network consists of a collection of nodes that interact and form a network.  Typically, the nodes are embedded either on a plane or on a 3-dimensional space, which limits the convergence
of all fundamental algorithms to scaling factors of roughly $O(n)$ \cite{gupta} , where n is the number of nodes. However, for several applications, convergence happen over the entire network, where as a local conversation is sufficient over a much smaller number of nodes and preferably independent of $n$.  The algorithm proposed here has a scaling factor that depends on the size of local neighborhood and independent of the size of network. It is a non-linear process, therefore its study is of interest.

The convergence of fundamental network processes, such as load balancing or reaching consensus,
is typically determined by well characterized structural properties of the underlying network topology. 
In particular, following the theory of Markov chain mixing, this convergence is of the form $N^{c_0} (1-\lambda)^t$,
where $N$ is the number of network nodes, $c_0$ is a constant, $t$ is time or number of steps,
and $\lambda$ is a quantity polynomially related to expansion, conductance or the second eigenvalue 
of the underlying network topology viewed as a graph. 

Ideally, efficient networks should have $\lambda \geq \frac{1}{\log^c N }$,
which implies convergence rate
\begin{eqnarray*}
N^{c_0} (1-\lambda)^t & = & N^{c_0} (1- \frac{1}{\log^c N})^t \\
    & = & N^{c_0} (1- \frac{1}{\log^c N})^{   \frac{t}{\log^c N}\log^c N          }\\
     & \simeq & N^{c_0} \left(   \frac{1}{e}      \right)^{   \frac{t}{\log^c N}\log^c N           }
\end{eqnarray*}
The above quantity converges quickly for $t\geq \log^{c_0 + c} N$. 
This is efficient, in the sense that it is a polynomial in the $\log $ of the size of the state space.

However, when real network topologies are embedded in 2 or 3 dimensional space, 
$\lambda$ is of the order of $1/N^c$, where $c$ is a constant. 
Consequently, convergence becomes 
\begin{eqnarray*}
N^{c_0} (1-\lambda)^t & = & N^{c_0} (1- \frac{1}{N^c})^t \\
    & = & N^{c_0} (1- \frac{1}{N^c})^{   \frac{t}{N^c}N^c           }\\
     & \simeq & N^{c_0} \left(   \frac{1}{e}      \right)^{   \frac{t}{N^c}N^c           }
\end{eqnarray*}
The above implies that one needs a number of steps polynomial in $N$, i.e., the size of the state space, 
before any meaningful level of convergence is reached. 

Suppose, however, that we do not want global convergence over all $N$ nodes of the entire topology. 
Suppose that convergence over a subset $n \subset N$ nodes, where $n << N$ i.e., $n$ is much smaller than $N$. 
Then, we wish that a suitable process can be defined that indeed converges over the $n$ nodes. 
Moreover, we wish that the convergence rate of this process efficient. 
Since $n << N$, a quantity polynomial in $n$ might be acceptable. 
In addition, one might embed a small graph on $n$ nodes that has $\lambda$ much smaller than the corresponding 
$\lambda$ of the entire network, which would result in even faster convergence rates.

We consider a network, where each node has a potential value called {\it charge}. Charge is similar to the probability in Markov chains. The nodes exchange charge
with their neighbors based on an agreed communication protocol. This is an iterative process and it terminates after few iterations. In the end,
we obtain a tightly connected local component. Size of the component is controlled by the node that starts initiating the charge distribution. 
In the earlier forest fire example, the node that senses fire will initiate the exchange of charge with its neighbors and controls the size of the local component. 
Obtaining such component
is very useful in applications such as load balancing, consensus algorithms, data fusion, etc. In the earlier forest fire example, selecting nodes from this 
component is a superior choice as they have a better chance at detecting fire because of the close proximity to the starting node. 

We present a local iterative algorithm (in Section \ref{sec:algo}), where nodes converge to a value . Our algorithm
is similar to algorithms used in load balancing \cite{load,load4,load3,load2}, consensus algorithm \cite{conmain} (see for a detailed survey), and Markov chains \cite{markov3,matrix,markov1,markov2} . However, it is a non-linear process that makes analysis of the algorithm
a challenging.

\subsection{ Our contributions}

We present a fast algorithm that is based on a non-linear iterative method and terminates after a finite number of steps,
unlike the traditional iterative methods that converge asymptotically. We present a proof for termination and give an upper bound
on the number of iterations before the process terminates.


Our formulation syntactically resembles with iterative methods
such as Markov chain \cite{matrix}, consensus algorithm \cite{conmain}, etc. However, unlike the above mentioned algorithms, our method does not explore the whole graph,
but only a small part of it. This is where locality comes in. There is another major difference between our algorithm and other iterative techniques.
Our algorithm is a non-linear process, which makes it difficult to analyze and prove guarantees. To the best of our knowledge, this is the first work on
 non-linear iterative method with  a locality preserving property. We do prove the termination and an upper-bound on the number of iterations.
We later show that the algorithm is in fact a generalization of markov chain. 

As discussed earlier, there are many applications of our algorithm. We allow multiple load in load balancing. We show that our algorithm
gives a solution to problem known as $f-consensus$ and also, any decomposable function can be used in solving consensus problems.
We later show the application in activity and event detection problems, and how information from different  nodes can be combined
to obtain a result. The problem of combining information from multiple is known as {\it data fusion}.

We now give an outline of the paper. In Section 2, we present an outline of the algorithm and succinctly present its application. 
In Section 3, we give definitions of different type of nodes in a graph, and their properties. We discuss the formation of network from 
 nodes and communication protocol between the nodes, i.e., a node can interact with its neighbor only. In Section 4, we
formally present the algorithm, and show an implementation using a fixed-point iteration. In Section \ref{sec:prop}, we prove various properties 
associated with the algorithm that are later used to derive an upper-bound on the number of iterations required for termination (Section 7). We show some generalization and extensions of the algorithm in
In section \ref{sec:gen}. Later, we present various application of our algorithm namely, consensus algorithm (Section \ref{sec:cons}), load balancing (Section \ref{sec:load}). We then present the experiments, related work and conclude.

\begin{figure*}
\centering
\includegraphics[width=1\textwidth]{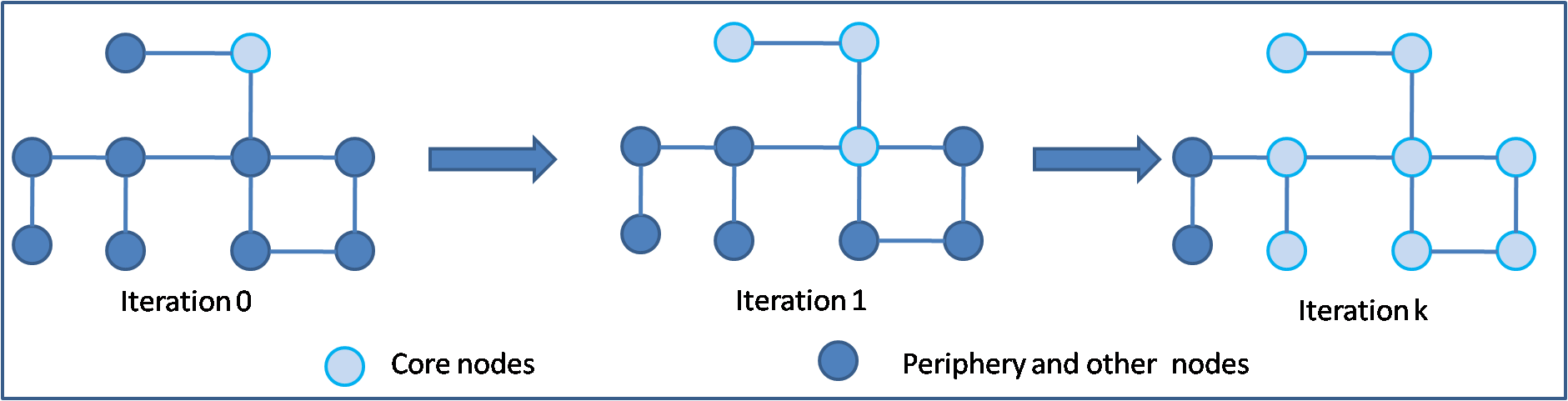}
\caption{Initially, there is one core node and then, in first iteration it distributes the charge to its neighbors increasing the size of the core.
After k iterations, we see a bigger core size.}
\label{corevsperi}
\end{figure*}

\section{Outline of algorithm}
\label{sec:outline}

In a graph, each node holds a charge. A charge is as a potential value of a node. It is a probability distribution in case of Markov chains \cite{matrix}. However, we generalize it by introducing the concept of charge, and do not constraint the charge to sum up to 1 across all the nodes. Therefore, instead of saying that a node has a probability of $0.3$, we say that it has a charge of $0.3$. We formally prove this generalization in Section 7.
Charge can be seen as the amount of flow a node receives in flow problems.
At every iteration, a node gives a
charge to its neighbors and in-turn also receives charge from them. And we repeat this process. After a certain 
number of iterations, this process stabilizes and there is no exchange of charge between nodes. Therefore, we say that the process terminates. 
For simplicity, let assume that if a node has more charge
than a certain threshold $\epsilon$, then it redistributes it to its
neighbors a part of it. 


We later show that at the end of the process, a group of nodes form a {\it core}. This core forms a small
tightly connected component. This set of core nodes is  useful in many applications such as, in load balancing (see Section~\ref{sec:load}), if a node has high load, it may want to redistribute it to the nodes close to it, but not further away. In activity tracker , if a node senses an unusual event, it would notify the nodes in a close proximity to reduce false positives. We discuss these applications in detail in later sections. 

Also, note that the size of the core is essentially a function of $\epsilon$. If $\epsilon$ is small, then we end up with a huge core and like wise
$\epsilon$ is large we have a small core. Depending on an application we may want a large or small core.

\begin{table}[t]
\begin{center}
\begin{tabular}{|c|c|c|}
\hline
Node type & Charge range \\
\hline

Core Nodes & $(\epsilon,1]$ \\
Peripheral nodes & $[0,\epsilon)$ \\
Other nodes & 0\\
\hline
\end{tabular}
\caption{There is four kinds of nodes in the graph. Active nodes still belong to the core, but have an ability to distribute charge to other nodes.
Peripheral nodes are the neighboring nodes of core. }
\label{tab:charge}
\end{center}
\end{table}

\section{System Model and Definitions}
\label{sec:model}

Consider $G(V,E)$ as a connected undirected graph, $|V| \! = \! n$,
$d_i$ is the degree of node $i$ and $d_{\max} \! = \! \max \{ d_i : i \in V \}$, and
$\vec{x^0} = (x_1^0 , \ldots , x_n^0)$ is a non-negative function over the set of nodes: $x_i^0 \! \geq \! 0$. 
We call $x_k^i$ as the charge of node $i$ at iteration $k$. We assume that the charge is non-negative for any node at any iteration.
In Markov Chain, we additionally put a constraint that $\sum_{i \in V} x_i^0 = 1$, so that it becomes a probability distribution. However,
such constraint is not necessary in our algorithm. For simplicity, we assume that there is only one node in the beginning with some positive charge,
and charge for rest of the nodes is zero. We later remove this assumption in Section \ref{sec:multi}.

We consider a network, where a node communicates with its neighbors. For a node to 
communicate with a node 3-hops away, it does it through the neighbors. We now present the model for formation of such network. 
There are many ways a group
of nodes can form a network. For example, if a node $i$ is in a top-k neighbors of node $j$ based on a certain metric, then form 
a directed edge from $j \rightarrow i$. Another method, is a distance based, where all the nodes within a certain
radius are automatically treated as neighbors. Notice that the second method gives a symmetric neighborhood, where if a
node $i$ is a neighbor of $j$, it implies that the node $j$ is also a neighbor of node $i$. However, in first method this is not a case, as 
two nodes may not be in each other's top-$k$ neighborhood at the same time. Therefore, we can either consider a situation,
where a mere presence of a directed edge imply the neighborhood, or we say that if there is a directed edge from both sides, it implies
neighborhood \cite{kn}. We now present the definitions of different type of nodes in the graph. Please refer to Table~\ref{tab:charge} for the
node types and charge.

{\bf Core nodes:} A core is a group of nodes that satisfy the condition $ x_i^* \geq {\epsilon} $, $\forall i \in G.$ Essentially,
all the nodes in the core have a minimum charge of $\epsilon $. As we shall later see that once a node becomes a part of core, i.e., once its
charge is more than $\epsilon $, it continue to be a part of core irrespective of iterations, as its charge does not fall below $\epsilon $. Another interesting thing to note that the core nodes themselves form a connected component (see Section \ref{sec:prop}), we later prove this property.

{\bf Peripheral nodes:} The peripheral nodes is a group of nodes that satisfy the condition $0 \leq x_i^* < \epsilon$, $\forall i \in G.$ 
Please refer to Table~\ref{tab:charge} for the
node types and charge.
These nodes are connected to the nodes in the core, but have a charge less than $\epsilon $. These nodes are in a periphery of core.
Since the charge of a peripheral node is less than $\epsilon$, it is not allowed to redistribute the charge. A peripheral node can receive charge from core nodes. Notice that any node that is connected to a peripheral node, but not to the core will have a zero charge, as it cannot receive the 
charge from peripheral node. Also, a node initially could be a part of peripheral nodes, but can become a core once its charge is more than $\epsilon$. However, a core node cannot become a peripheral node in any iteration.

{\bf Charge conservation of the graph:} At any iteration, the total amount of charge in the graph remains the same.

{\bf Charge conservation at the node:} The difference in charge in a node between two consecutive iterations is the difference between the incoming
and outgoing charge to that node.

So far, we have discussed the properties of core nodes and peripheral nodes. In next section, we precisely define the process of exchange of charge between the nodes. 

\begin{figure}
\centering
\includegraphics[width=0.5\textwidth]{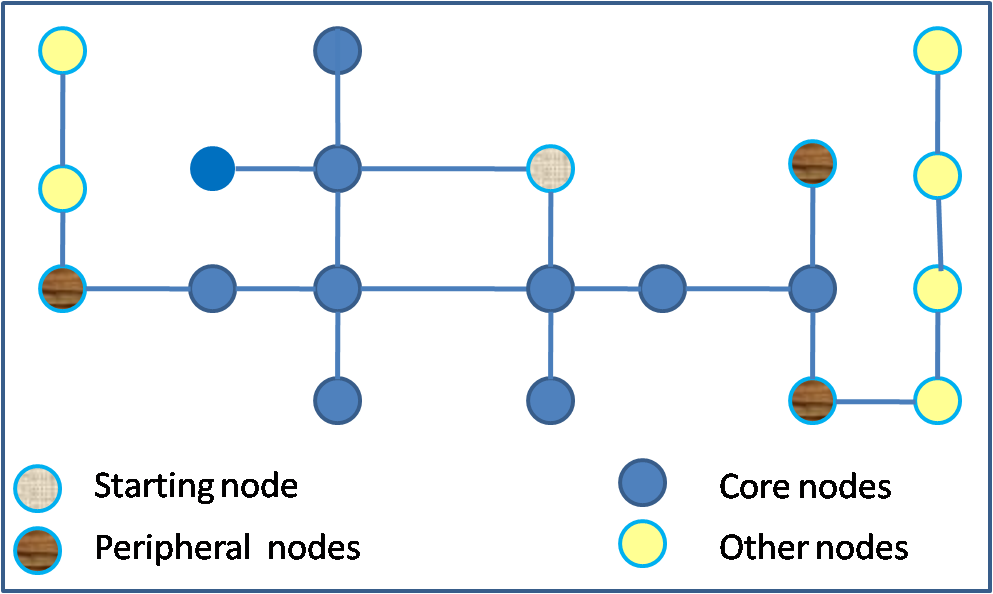}
\caption{Initially, there is one core node and in first iteration it distributes the charge to its neighbors increasing the size of the core.}
\label{corevsperi}
\end{figure}

\section{Algorithm}
\label{sec:algo}

For a node $i$ at iteration $t$, we define a variable $z_i^t$ such that

\begin{displaymath}
z_i^t \,\,= \,\, \left\{ 
\begin{array}{lll}
1 & {\rm if} & x_i^t > \epsilon \\
0 & {\rm if} & x_i^t \leq \epsilon
\end{array}
\right. 
\end{displaymath}

\noindent
For any node with $z_i^t=1$ imply that it is a part of core. It is a necessary and sufficient condition. We use $z_i^t=1$ and active node interchangeably.
 In our algorithm, the nodes with $z_i^t=1$ are only nodes allowed to distribute
the charge to their neighbors. In other words, nodes with a charge more than $\epsilon$ can transfer the charge.

For a node $i$ at iteration $t$, the amount of charge it carry in iteration
$t$ is represented by $x_i^{t}$, and in $t+1$ iteration by $x_i^{t+1}$. Let $d_j$ be the degree of node $j$. The charge $x_i^{t+1}$ is computed as follows:

\begin{equation}
\label{p1}
x_i^{t+1} \,\,= \,\,\left( \epsilon z^t_i+\frac{( x_i^t-\epsilon)}{2}  z_i^t + x_i^t (1-z_i^t) \right) \,\,+ \,\,
\frac{ 1}{2}\sum_{j : \{ i,j \} \in E } \frac{(x_j^t-\epsilon)z_j^t}{d_j} 
\end{equation}

\noindent

The first part of the equation  indicates the amount of charge node $x_i^{t+1}$ retains.
If $z_i^t=0$, then the node retains all the charge it had in the previous iteration, and therefore by the conservation of charge at the node , it
cannot send charge to its neighbors. If $z_i^t=1$, then the node retains $\epsilon$  charge as given by $\epsilon z^t_i+\frac{( x_i^t-\epsilon)}{2}  z_i^t $, and it sends rest of the
charge to its neighbors.

Second part of the equation $\frac{ 1}{2}\sum_{j : \{ i,j \} \in E } \frac{(x_j^t-\epsilon)z_j^t}{d_j}  $ indicates the amount a node receives from its neighbors.
Since only the node, with $z^t_*=1$ are allowed to transfer the charge, therefore we have an indicator variable $z^t_*$ indicating if the node is eligible to send the 
charge. The nodes with $z^t_*=1$ can only distribute the other $ \frac{(x_j^t-\epsilon)}{2d_j} $ fraction to its neighbors. In our process, they distribute
it equally to their $d_j$ neighbors. Here $d_j$ indicates the degree of node $j$. It is not difficult to verify the charge conservation with
the above formulation

Note that above formulation is a non-linear system of equation that terminates 
after finite steps using fixed-point iterations. This non-linearity makes it incredibly 
difficult to analyze, as the standard methods from linear algebra cannot be applied.
 We prove the termination of our
algorithm in Section 7 and also give an upper-bound on the number of iterations required
for a guaranteed termination. The algorithm exhibits a locality property that
the charge is disseminated to the nodes that are closer in proximity. We discuss this locality in detail
with applications in Section~\ref{sec:local}.

\subsection{Properties of the algorithm}
\label{sec:prop}
For a subset of nodes $H \! \subset \! V$, we use $\Gamma_1 (H)$ to denote
the 1-vertex neighborhood of $H$: 
$\Gamma_1(H) = \{ j \! \in \! V \setminus H : 
\{ i^\prime \! \in \! H : \{ i^\prime , j \} \! \in \! E \} \! \neq \! \emptyset \}$.
\begin{theorem}
Let $i_0 \! \in \! V$ be a node such that $x_{i_0}^0 \! = \! 1$ (consequently $x_i^0 \! = \! 0$, 
$\forall \! i \! \neq \! i_0$) and let $n \geq\frac{1} {\epsilon}$. 
Then, there exists core group of nodes $H \! \subset \! V$ with $i_0 \! \in \! H$ 
and there exists $T \! >\! 0$ such that :\\
$~~~~${\bf (i)} ${\epsilon}< x_i^* \leq 1$, $\forall i \in H$. \\
$~~~~${\bf (ii)} The subgraph of $G$ induced by $H$ is connected. \\
$~~~~${\bf (iii)} $x_i^{t+1} = x_i^t$, $\forall i \in V$ and $\forall t \geq t_0$. \\
$~~~~${\bf (iv)} $0 \leq x_i^t < \epsilon$, $\forall i \in \Gamma_1 (H)$. \\
$~~~~${\bf (v)} $x_i^t = 0$, $\forall i \in V\setminus\left( H \cup \Gamma_1(H) \right)$
and $\forall t \geq 0$.\\
$~~~~${\bf (vi)} $|H|\leq 1/\epsilon$. \\
$~~~~${\bf (vii)} $|V|\leq 1/\epsilon$, then algorithm does not terminate. 
\end{theorem}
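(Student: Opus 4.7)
My plan is to set up three invariants of the process---charge conservation, non-negativity, and a monotonicity property stating that once $z_i^t=1$ then $z_i^{t'}=1$ for all $t'\geq t$---and then derive each of (i)--(vii) either as a direct consequence or via a short additional argument. Charge conservation follows by summing \eqref{p1}: the outgoing contribution $(x_j^t-\epsilon)/(2d_j)$ sent by an active $j$ appears exactly $d_j$ times in the inflow sums of its neighbors and exactly once in the loss term of $j$'s own self-retention. Non-negativity is a routine induction. The monotonicity of $z_i^t$ is the crucial structural fact: when $z_i^t=1$, the self-retention in \eqref{p1} alone already guarantees $x_i^{t+1}\geq \epsilon + (x_i^t-\epsilon)/2 > \epsilon$, so the core $H^t := \{i : z_i^t=1\}$ is non-decreasing in $t$.

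With these invariants in hand, (vi) is immediate from conservation: at any fixed point, $|H|\epsilon < \sum_{i\in H} x_i^* \leq \sum_{i\in V} x_i^* = 1$. Claim (v) follows by a parallel induction on $t$: the support of $x^t$ can expand by at most one hop per iteration and only along edges incident to a currently active node, so by monotonicity its eventual support is contained in $H\cup \Gamma_1(H)$. Connectedness (ii) is the same induction re-read: $H^0=\{i_0\}$ is connected, and every newly activated node was, at the iteration it crossed $\epsilon$, a peripheral neighbor of a pre-existing core node. Claims (i) and (iv) are then just the definitions of core and peripheral specialized to the fixed-point state.

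The termination statement (iii) is the most delicate step. My plan is to use that $|H^t|$ is monotone non-decreasing and bounded above by $1/\epsilon$, so there is a first time $t_1$ after which $H^t$ is constant. For $t\geq t_1$ the dynamics restricted to $H$ become linear in the excess variables $y_i^t := x_i^t-\epsilon$, with update matrix having $1/2$ on the diagonal and $1/(2d_j)$ on the $(i,j)$-entry for $j\in H$ a neighbor of $i$. If every core node has at least one neighbor outside $H$, the row sums of this matrix drop strictly below $1$ and its spectral radius is strictly less than one, forcing geometric decay of $\vec{y}^t$. This is where I expect the main obstacle to sit: (iii) asks for \emph{exact} equality $x_i^{t+1}=x_i^t$ after a finite $t_0$, whereas the linear analysis only delivers asymptotic convergence. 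I anticipate the argument must either invoke a thresholding convention on $\epsilon$, or (iii) should be interpreted as convergence in the limit.

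Finally, (vii) falls out of the same fixed-point analysis by contradiction. If the process terminates, then for each peripheral $i$ the stationarity equation $x_i^* = x_i^* + \frac{1}{2}\sum_{j}(x_j^*-\epsilon)z_j^*/d_j$ forces the inflow to vanish, so $i$ has no active neighbor; combined with connectedness of $G$ and $i_0\in H$, this implies $H=V$. But conservation then gives $|V|\epsilon < \sum_i x_i^* = 1$, contradicting the hypothesis $|V|\leq 1/\epsilon$ of (vii). The conceptual takeaway is that termination requires a genuine peripheral boundary to absorb the residual circulating charge, which is exactly what the strict inequality $|V|>1/\epsilon$ provides.
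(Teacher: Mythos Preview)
Your invariant-based skeleton (conservation, non-negativity, and monotonicity of the active set $H^t$) is exactly the paper's approach for (i), (ii), (iv), (v), (vi); the paper's proof is simply much terser, stating each item in a sentence or two without isolating the invariants as explicitly as you do. For (iii) the paper punts to its later convergence section, and your observation that the linear contraction on the excess variables $y_i^t=x_i^t-\epsilon$ yields only asymptotic decay, not a finite-time fixed point, is a legitimate concern that the paper does not actually resolve either.

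Your argument for (vii), however, contains a logical slip. From termination you correctly deduce $H=V$ (a sub-$\epsilon$ node with an active neighbor cannot be stationary, since its inflow is strictly positive), and then conservation gives $|V|\epsilon<\sum_i x_i^*=1$, i.e.\ $|V|<1/\epsilon$. But this does \emph{not} contradict the hypothesis $|V|\le 1/\epsilon$; it is a special case of it, so no reductio has been obtained. Indeed your own calculation shows that when $|V|<1/\epsilon$ the uniform profile $x_i\equiv 1/|V|>\epsilon$ is a genuine fixed point of the update rule, so non-termination cannot be proved via nonexistence of fixed points. The paper's argument for (vii) is different and more direct: by pigeonhole, total charge $1$ spread over fewer than $1/\epsilon$ nodes forces some node strictly above $\epsilon$ at every step, hence some node is always actively redistributing. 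That is a claim about the transient (the process never \emph{settles} in finite time), not about the absence of a limiting fixed point, and it is this distinction your reductio conflates.
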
 
{\sc Proof of Theorem 1.} 
We assumed that there is a single starting node with charge 1 ( we relax this constraint in Section \ref{sec:multi}). All the other nodes in the beginning have a zero charge.
{\bf(i)} If a node has a charge more than $\epsilon$, it will retain  $\epsilon$  charge (by Eq. 1). All the other nodes with charge more than $\epsilon$, will continue to hold all of the charge and will not distribute it to their neighbors. {\bf(ii)} A node can only receive a charge
from a core . Therefore there has to be a path of core nodes between a node receiving a charge and $i_0$ (starting node). Also, a node that becomes
a core node continues to be a core node, proving the connectivity. {\bf(iii)} See Section7 {\bf(iv)} Here, $ \Gamma_1(H) $indicates the set
of nodes that are neighbor of core including the core nodes itself. $ H \cup \Gamma_1(H)$ shows the peripheral nodes. Such peripheral nodes can
receive the charge from core since they are the neighbors of core, but their charge has to be less than $\epsilon$, otherwise they become core.
{\bf(v)} All the nodes that neighbors of peripheral nodes, but not a part of core will have zero charge, as the peripheral node can have 
at most $\epsilon$ charge and are not allowed to distribute it to their neighbors.{\bf(vi)} The minimum charge a core node can have is $\epsilon$ and 
charge of a starting node $1$, there by the conservation of charge, there could be at most $1/\epsilon$ core nodes. {\bf(vii)} If the size of the graph is less than $1/\epsilon$, then there is at least one node with charge more than $\epsilon$ and will continue to give charge to
its neighbors, therefore the process cannot terminate.


\section{ Convergence Properties}
In this section we prove the convergence properties. If we prove the convergence on the core nodes, it automatically prove it for the whole graph
as the peripheral and other nodes can not transfer the charge. We show the convergence in steps. Initially,
we prove that any two nodes across an edge in a core converge to $epsilon$ exponentially fast. To prove these properties, we assume that the graph is regular. Later we show with a chain of arguments that nodes in a core at any iteration converge in a finite time.  
Before we proceed further, we prove the following inequality that will be subsequently used to prove the convergence properties.
Let $C^t$ be the set of core nodes at iteration $t$. Then, for a node $i \in C$
\begin{align*}
\label{p1}
|x_i^{t+1} - x_i^{t}| \,\,&= 
\left | \frac{ x_i^t -x_i^{t-1}}{2}  \,\,+ \,\,
\frac{1}{2d} \sum_{j : \{ i,j \} \in E,j \in C } {(x_j^t-x_j^{t-1}} ) \right|\\
&\leq  \frac{ |x_i^t -x_i^{t-1}|}{2}  \,\,+ \,\,
\frac{1}{2d} \sum_{j : \{ i,j \} \in E,j \in C } {|x_j^t-x_j^{t-1}} | \\
&\leq 
\frac{1}{2d} \sum_{j : \{ i,j \} \in E } {|x_j^t-x_j^{t-1}} | + |x_i^t -x_i^{t-1}|\\
\end{align*}

\subsection{Error bound}
Now we bound the error for an edge with in a core. We show that for a pair of nodes of an edge, the sum of the difference of their charge start
to decrease exponentially fast. We prove it using mathematical induction.

\begin{Theorem}
The sum of difference in charge  of two nodes between two consecutive iterations 
$t$ and $t+1$ is bounded by an inverse exponential function of $t$:
\begin{align}
|x^{t+1}(j) - x^t(j)|+|x^{t+1}(i) - x^t(i)| \leq {\left(\frac{1}{2}\right)^{t}}.
\end{align}
\end{Theorem}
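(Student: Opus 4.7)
The natural approach is induction on $t$, built directly on the single-step inequality derived immediately before the statement, namely
\[
|x_i^{t+1} - x_i^{t}| \;\le\; \frac{|x_i^{t} - x_i^{t-1}|}{2} \;+\; \frac{1}{2d}\sum_{k\sim i}|x_k^{t} - x_k^{t-1}|,
\]
which comes from writing down the update rule at two consecutive times, subtracting, and taking absolute values under the regularity assumption and the hypothesis that the nodes involved are in the core (so that the indicator $z$ disappears and the recursion becomes purely linear).

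For the base case $t=0$, the plan is to plug the initial condition $x_{i_0}^0=1$, $x_k^0=0$ for $k\ne i_0$ directly into Equation~(1) and compute $x^1$ explicitly. One gets $|\delta_{i_0}^0|=(1-\epsilon)/2$ and $|\delta_k^0|=(1-\epsilon)/(2d_{i_0})$ for each neighbor of $i_0$, so the required bound $|\delta_i^0|+|\delta_j^0|\le 1=(1/2)^0$ is immediate.

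For the inductive step I would actually try to prove the stronger pointwise statement $|\delta_k^t|\le(1/2)^{t+1}$ for every core node $k$, from which the edge bound follows by adding two copies. Substituting this hypothesis into the one-step inequality gives
\[
|\delta_i^{t}| \;\le\; \tfrac{1}{2}(1/2)^{t} \;+\; \tfrac{1}{2d}\cdot d\cdot (1/2)^{t} \;=\; (1/2)^{t},
\]
which recovers the edge statement but loses one power of $1/2$. This tells me that the main obstacle is sharpness: the naive triangle-inequality step treats self-contraction and neighbor-averaging as independent, but they are not. To obtain the full $(1/2)^t$ rate one has to exploit cancellation --- concretely, that $\sum_k \delta_k^{t}=0$ by conservation of charge, so neighboring $\delta$'s cannot all carry the same sign --- and that the map $\delta^{t-1}\mapsto\delta^t$ is the lazy-walk operator $\tfrac{1}{2}(I+\tfrac{1}{d}A)$ acting on a zero-mean vector.

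The cleanest way I see to close the argument is therefore to track a signed quantity rather than an absolute one: for example, the oscillation $\mathrm{osc}(\delta^t) := \max_{k\in C}\delta_k^t - \min_{k\in C}\delta_k^t$, which is manifestly halved (or better) by a single lazy-averaging step because each new value is a convex combination of old values with self-weight $1/2$. Once one shows $\mathrm{osc}(\delta^t)\le(1/2)\,\mathrm{osc}(\delta^{t-1})$ and combines it with the zero-sum property to bound $\max_k|\delta_k^t|$ by the oscillation, the edge-sum bound drops out by monotonicity. Pinning down this oscillation-contraction step cleanly, while handling boundary core nodes whose neighbors are not all core (so that the update is no longer purely the lazy-walk operator), is the part I expect to be delicate.
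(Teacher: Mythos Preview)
Your framework---induction on $t$ built on the one-step contraction inequality---is precisely the paper's route. The paper's base case is taken at $t=1$: it sums the one-step bound for the two endpoints and argues that, since at most half of the unit charge is redistributed in the very first step, the global quantity $\sum_k|x_k^1-x_k^0|$ is at most $1/2$, which then dominates both the self-terms and the neighbor-averages at once.

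Where you diverge is the induction step. The paper does \emph{not} pass to oscillation, the lazy-walk operator, or the zero-mean property. It simply writes the summed inequality (in the weakened form with self-coefficient $1$ rather than $1/2$)
\[
|\delta_i^{t}|+|\delta_j^{t}|\;\le\;|\delta_i^{t-1}|+\tfrac{1}{2d}\sum_{k\sim i}|\delta_k^{t-1}|\;+\;|\delta_j^{t-1}|+\tfrac{1}{2d}\sum_{m\sim j}|\delta_m^{t-1}|
\]
and asserts that the right-hand side is at most $\tfrac{1}{2}\,(1/2)^{t}$ ``by the induction hypothesis.'' In other words, the paper stops exactly at the step you correctly flagged as problematic: it neither explains where the extra factor $\tfrac12$ comes from nor how the hypothesis on the single edge $\{i,j\}$ controls the neighbor sums over $k\sim i$ and $m\sim j$. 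Your oscillation-contraction route is therefore strictly more machinery than the paper deploys and, if it could be made to work, would actually close a gap the paper leaves open. One caution, though: the claim that a single lazy-averaging step halves the oscillation is false on a general regular graph (on a $6$-cycle with values $1,1,1,0,0,0$ the oscillation is unchanged after one step), so convexity with self-weight $1/2$ is not enough by itself; the leakage at the core boundary that you already call ``delicate'' is where any genuine contraction has to come from.
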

\begin{proof}
We prove using mathematical induction.

\textbf{Basis}: We first prove for $t=1$.
\begin{align*}
\sum_{i,j}|x^{2}(i) - x^{1}(i)| \leq
\frac{1}{2d} \sum_{k : \{ i,k \} \in E } {|x_k^1-x_k^{0}} | +
 |x_i^1-x_i^{0}| +\\
\frac{1}{2d} \sum_{m : \{ j.m \} \in E } {|x_j^1-x_j^{0}} | +
 |x_m^1-x_m^{0}| 
\leq \frac{1}{4}
\end{align*}

This confirms the basis. The last equality holds because at most half of the unit charge can be distributed initially. Therefore the sum of absolute value across
all nodes could be at most $1/2$. 

\textbf{Induction step}: We assume the bound to be true for $x^t(i)$, i.e., for the
$t^\text{th}$ iteration. In the $(t+1)^\text{th}$ iteration,

\begin{align*}
\sum_{i,j}|x^{t+1}(i) - x^{t}(i)| \leq
\frac{1}{2d} \sum_{k : \{ i,k \} \in E } {|x_k^t-x_k^{t-1}} | +
 |x_i^t-x_i^{t-1}| +\\
\frac{1}{2d} \sum_{m : \{ j.m \} \in E } {|x_j^t-x_j^{t-1}} | +
 |x_m^t-x_m^{t-1}| \\
\leq \frac{1}{2} \cdot  \left ( \frac{1}{2} \right )^{t} = \left ( \frac{1}{2} \right )^{t+1}    \mbox{[Induction Assumption]}\\
\end{align*}
\end{proof}

It is not difficult to see that we need to perform $O(\log{1}{/\epsilon})$ iterations in order to make $|x(i)-\epsilon|<\delta$. This shows that the algorithm is really fast, as we can make any node come to close $\epsilon$ in few iterations. We use this error bound to establish the convergence. If we set the parameter $\epsilon$ as 0, then our algorithm   becomes a lazy random-walk.

\subsection{Proof of convergence}
Before we give a  proof of convergence. We discuss the high level idea. At time $t$, let $C$ an active core with access charge. Just to remind the excess charge is defined as the sum of all the extra charge across all the nodes that can be transferred to their neighbors. With time, the access charge continue to flow towards periphery, and excess charge  will continue to decrease. As long as the size of the network is more than $1/\epsilon$, a convergence will be reached. It is easy to see that if a network is small, then there will not be a convergence sa there has to be at least one node with
excess charge.

Let $C^i$ be the set of core nodes at iteration $i$. Using the bound proved earlier, we establish that any node $k \in C^i$ will reach convergence.
However, in order to reach convergence, it has to give charge out. Since whole $C^i$, reaches convergence it implies that new nodes from periphery and beyond have received the charge and forming a new core $C^j$ at iteration $j$.  The gap $j-i$ is finite because in $C^i$ converges exponentially fast. We 
can apply the same arguments to $C^j$ and since the maximum size of core is bounded by $1/\epsilon$, there could be at most $1/\epsilon$ new cores.
This proves the convergence.

\section{ Extensions}
\label{sec:gen}

\subsection{Nodes with Different Thresholds}
\label{sec:thres}
In this section, we discuss the scenario where nodes have different thresholds, i.e., there is a separate threshold $\epsilon_i$ for each node $i$.
This scenario is especially useful in applications such as load balancing (see Section \ref{sec:load}), where nodes have different load capacities. For example, a node can support a maximum capacity of say, 5 units, where as a different node can have a capacity of 100 units. We encode such information using the 
parameter $\epsilon$.
We discuss more on application in Section \ref{sec:load}. For each node $i$, we define variable $z_i^t$ as follows:

\begin{displaymath}
z_i^t \,\,= \,\, \left\{ 
\begin{array}{lll}
1 & {\rm if} & x_i^t > \epsilon_i \\
0 & {\rm if} & x_i^t \leq \epsilon_i
\end{array}
\right. 
\end{displaymath}

Note that even when using a different $\epsilon_i$, we still maintain the conservation of charge both at the node level and consequently on the whole
graph. Eq. (1) remains the same.

\subsection{Multiple starting nodes}
\label{sec:multi}
There are situations where we require multiple starting nodes and the net charge across the nodes is more than 1.
For example, in case of a forest fire, multiple sensors detect fire at the same time. Therefore, we have a multiple starting nodes.

Let us assume that there are two nodes $i$ and $j$ that have an initial charge of more than $\epsilon$ and therefore,they can 
dissipate it to their neighbors. Let $H(i)$ and $H(j)$ denote the combined set of core nodes and peripheral nodes formed by the nodes $i$ and $j$
respectively. Now,
there are two scenarios possible {(i)} $H(i) \bigcap H(j) = \phi$ and {(ii)} $H(i) \bigcap H(j) \neq \phi$. In second case 
where two subgraphs induced by node $i$ and $j$ do not intersect, imply that the two starting nodes do not impact each other.

It is very important to include peripheral nodes in sets $H(i)$ and $H(j)$, as it is possible that a node receives less $\epsilon$ charge 
from nodes $i$ and $j$ independently, but has a combined charge of more than $\epsilon$, and the node becomes active. This gives rise to the first scenario where set $H(i)$ and $H(j)$ do have some common nodes. $H(i)$ and $H(j)$ will continue to grow until they intersect. Then some of the nodes in the intersection might become active. If there are no active nodes, then this scenario is similar to the second case where $H(i) \bigcap H(j) \neq \phi$.
Otherwise, two subgraphs merge and form a single core, and algorithm continue to run.

\section{Local Algorithm}
\label{sec:local}
In Eq.(1), we introduced an iterative algorithm, where each node in a graph updates its value(or charge)
after exchanging information with neighbors. After a few iterations, this decentralized process terminates. An interesting property to note is that algorithm runs on a small part of graph and that is solely depended on $\epsilon$. This makes our algorithm very favorable 
compare to other iterative algorithms such as Markov Chains \cite{markov3}, random walk with restarts \cite{tong}, consensus algorithm \cite{conmain}, as they require
the whole graph at their disposal.

To the best of our knowledge, this is the first local iterative algorithm that does run on a small part of a graph (core nodes) and terminates. 
This makes it desirable for the applications where we need to look up in a small neighborhood. For example, if a node detects a signal wants a group of node to track an intruder in a region. It can build a set of core nodes (that are in close proximity) by running the algorithm. This scenario is very useful in applications such as activity tracking or border surveillance \cite{article,prob,target1,target2}. Where a group of sensor nodes is required to track the movement. Here, nodes in core start sensing for a particular
signal instead of all the sensors in the graph.
Another example is event detection. For example,
if a sensor node reports a fire event, it is necessary to validate it multiple nodes to reduce the false-positive rate. This also makes the system fault resilient.

In many applications, we require to compute a function on a set of nodes. The function could be a mean distributed averaging \cite{conmain}, Kalman filter \cite{kalman}, Maximum likelihood estimator \cite{boyd}, linear least-squares estimator \cite{mean}, max/min, or a decision based function. For example, in case of consensus algorithms, the function is usually averaging. In case of forest fire,
sensor nodes could report a binary value indicating if they sense any fire or not. This is an example of decision based function and computing average is a value based
function. The problem of combing data from multiple sensor nodes has been well studied in sensor nodes and is referred as {\it data fusion}. We discuss
data fusion in detail in Section \ref{sec:fusion}. For now, we focus on the function we want to compute in core nodes. The function needs to have a property 
known as decomposability, i.e., it should be computable in a distributed setting over the set of nodes. For example, maximum can be computed, as 
the nodes can update their value to the maximum of its own value and its neighbors, and we iterate this process.

We now give details of our local algorithm. We have two components to our local algorithm: Algorithm \ref{alg:core} and Algorithm \ref{alg:fun}. In first component (Algorithm \ref{alg:core}), we build the core set. This
is achieved by running the iterative algorithm as described in the Eq. (1). In second component (Algorithm \ref{alg:fun}), we compute the function over the set of core nodes. 
In first algorithm, where we form a core set. Vector $\vec{x^0}$ (Line 1) represent the charge of the nodes. The  nodes that have 
charge more than $\epsilon$ belongs to the core. After we run this algorithm, we return the updated vector $\vec{x^t} $ and the core set $C$. In line 4, we iterate the Eq. (1) until we terminate. We can either set the number of termination steps apriori using the upper
bound we proved, or using a termination criteria, i.e., $\sum_i |x^k_i-x^{k+1}_i|=0$. However the later method is more computationally expensive, as
it requires storing the difference between the two iterations and it has to be communicated to other nodes. At the same time, while being 
computationally more expensive, this approach requires fewer iterations compared to the upper bound. In line 6, for every node in core and its neighbors,
we run the iterative algorithm as described in eq (Line 6). The reason to include the neighbors is that the core node with more than $\epsilon$ charge 
will give charge to the neighbors also that are not the part of core yet. In the next step (line 7), we check whether the node is a part of core or not (line 8).
This is achieved by checking if their charge is more than $\epsilon$. With each iteration, we aim to increase the size of core and we stop once we 
cannot increase the size core using the termination criteria or upper bound. At the end of the this algorithm, we return the updated charge values and set
of core nodes (line 12).

Now we move to the second component of the algorithm (Algorithm \ref{alg:fun}) where we compute a decomposable function over a set of nodes. In previous algorithm (Algorithm \ref{alg:core}), we
obtained the core set. Now, we just have to run the decomposable function on the core set. Note that we are not going to discuss the exact implementation 
of such functions as it varies with each function. However, it could be any function that is decomposable, i.e., the function can be computed in 
a distributed setting.

\begin{algorithm}
\caption{Core building algorithm}
\label{alg:core}
\begin{algorithmic}[1]
\STATE {\bf Input:} $\vec{x^0} = (x_1^0 , \ldots , x_n^0)$ 
\STATE Let $C$ be a set of core nodes; 
\STATE {\bf Output:} $\vec{x^t}, C$ 
\REPEAT 
\FOR{node $i$ in $C \cup Neighbors(C)$}
\STATE $x_i^{t+1} \,\,= \,\,\left( \epsilon z^t_i+\frac{( x_i^t-\epsilon)}{2}  z_i^t + x_i^t (1-z_i^t) \right) \,\,+ \,\,
\frac{ 1}{2}\sum_{j : \{ i,j \} \in E } \frac{(x_j^t-\epsilon)z_j^t}{d_j} $ 
\IF{$x(i)>\epsilon$}
\STATE $C=C \cup\{ i\}$
\ENDIF

\ENDFOR
\UNTIL {Termination}
\STATE {\bf return} $\vec{x^t},C$; 
\end{algorithmic}
\end{algorithm}

\begin{algorithm}
\caption{Function computation algorithm}
\label{alg:fun}
\begin{algorithmic}[1]
\STATE {\bf Input:} $\vec{y^0} = (y_1^0 , \ldots , y_n^0)$ 
\STATE Let C be a set of core nodes 
\STATE {\bf Output:} $ \vec{y^t}$ 

\FOR{node $i$ in $C $}
\STATE $Y_i^{t+1}=f(Neighbors(i))$

\ENDFOR
\STATE {\bf return} $\vec{y^t}$; 
\end{algorithmic}
\end{algorithm}


\section{Applications}
\label{sec:app}

\subsection{Consensus Algorithms}
\label{sec:cons}

In a network , consensus signify an agreement between nodes on a certain
quantity of interest. For example, in case of a forest fire, then we expect sensor nodes
to come to an agreement that there is a fire and raise an alarm. If there is just one node
indicating the presence of fire, this might imply a false positive due to malfunctioning of a node. 
A consensus algorithm is a protocol involving information exchange between the sensor nodes to come
to an agreement. These algorithms have been studied in great detail in Markov chains \cite{conmain}, and often used in load
balancing algorithms \cite{muthu}. 

Recently consensus algorithms play a major role in in  networks. In particular, 
many applications require a global clock synchronization,
that is all the nodes of the network need to share
a common notion of time \cite{sync1}. Therefore, clock synchronization has been an active area of research (see \cite{sync2} for a comprehensive survey).
Consensus algorithm have been heavily used in data fusion methods such as using kalman filtering \cite{con1}. We later discuss data fusion 
in detail in Section \ref{sec:fusion}.

Since all the core nodes have the $\epsilon$. The algorithm
algorithm gives a solution to a problem known as $f$-consensus\cite{conmain} where all nodes are asymptotically have the same value.

To the best our knowledge, we have not come across any local consensus/$f$-consensus algorithm that runs on small part of graph. We now present a further extension and show
that using algorithm 2, we can turn any consensus algorithm locally. As consensus is an information exchange between the neighboring nodes and each
node applies a function on the information it exchanged from its neighbors. The function could be averaging operator, graph laplacian or any other operator.
If we plug this function in algorithm 2, we obtain a local version of that consensus algorithm.

\subsection{Load Balancing}
\label{sec:load}

Load balancing distributes workloads across multiple resources, which in our case is network nodes. By balancing the load across the nodes, we improve
the energy utilization and thus extends the lifespan of the  network \cite{load}. In another example, sensor's energy might not be 
sufficient to support long communication and may require hops to forward the information \cite{load2}. This makes the load balancing a well studied
problem in context of routing in networks \cite{load3,load4}.

In our work, we assume that each r node has a finite capacity. This is a reasonable assumption as sensors have a limited battery life
and processing power.
Therefore, there is an upper-limit on the maximum load a node can handle. We allow each node to have a different load. The problem with multiple load 
capacities can easily be modeled 
with our algorithm. Moreover, it is preferred that a load is first given to a close neighbor compared to the node which is far away. 
This preference has two advantages 1) smaller network delay and 2) lower energy consumption. 

Let $\epsilon_i$ represent the maximum load the node $i$ can handle, and let $L_i$ be the current load at node $i$. Therefore,
the amount of extra load the node $i$ can handle is $\epsilon_i- L_i$, otherwise this difference has to be distributed to the other nodes.
We set the initial charge vector (in Algorithm \ref{alg:core}) $\vec{x}= (L_0,\ldots,L_n)$. In section \ref{sec:multi} , we discussed the relaxation to allow multiple thresholds,
that is a different $\epsilon_i$ for node $i$. Note that in case of load balancing this $\epsilon_i$ precisely represent the maximum load capacity of node $i$. 
It is
out of scope of current work to discuss the schemes to identify the maximum load of a  node. However, it could depend on few parameters such as battery life, maximum distance it has to transmit/sense, or the amount of processing.

Algorithm \ref{alg:core} with the current settings precisely solves the load balancing problem.
utilization of core nodes. If a node exceed its load capacity, i.e., $L_i>\epsilon_i$, it will transmit the data, as the variable $z_i=1$ in such case (see section 5.2). The algorithm favors the nodes that are in a close vicinity. This greatly helps in improving energy consumption and
reduce network delays.

\subsection{Data Fusion}
\label{sec:fusion}
There are many applications in event detection that require data fusion \cite{collab,sensordeploy2002}. Another example of decision based fusion
monitoring temperature in a building 
where each sensor senses the temperature and a central unit decides to raise or lower the temperature.
Here, the central unit might not be interested in the readings of individual sensors, but require a "net" temperature which 
might be the mean, median or any other function. Such function needs to be computed in a distributed manner. Data fusion schemes have been
used extensively in activity tracking and event detection. In fact, they improve the performance of the event detection system by handling the noisy measurements from multiple sensors \cite{fusion,fault,covmain}.

Formally, let $\vec{x^0} = (x_1^0 , \ldots , x_n^0)$ is a non-negative function representing the measurement made by the sensor nodes, and we want to compute 
$y=f( x_i^0 , \ldots , x_k^0)$ over a fixed set of nodes. Here, $f(\cdot)$ is a function that is fusion of values of nodes at a time. This function could 
mean distributed averaging \cite{conmain}, Kalman filter \cite{kalman}, Maximum likelihood estimator \cite{boyd}, linear least-squares estimator \cite{mean}, max/min, or any other decomposable operator. By fixed set of nodes, we mean that function computed locally over a small set of nodes instead on the whole graph. 
The references mentioned above show how to implement function to aggregate data in a distributed setting on a whole graph. So, it is natural to ask about the data aggregation at a local level. This is where we take advantage of our algorithm. The size of locality is controlled by $\epsilon$, and thus forming a core. As discussed in Section \ref{sec:local}, we can run any function implemented in a distributed setting in our framework using algorithm as long as the function is 
decomposable. Decomposability is a necessary condition for a function to be implemented in a distributed setting.

There are two types of data fusion schemes: decision based and value based. In decision based scheme, a group of nodes arrive at a common decision. 
For example, in case of a fire, a group of nodes can arrive at However, in value based fusion scheme, same group of sensor nodes can compute the average temperature. In our work, $f(\cdot)$ is not restricted to either decision based or value based scheme.

\begin{figure*}
\centering
\subfigure[]
{
\includegraphics[scale=0.5]{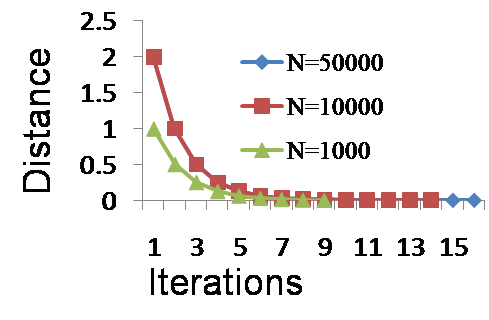}
}
\subfigure[]
{
\includegraphics[scale=0.5]{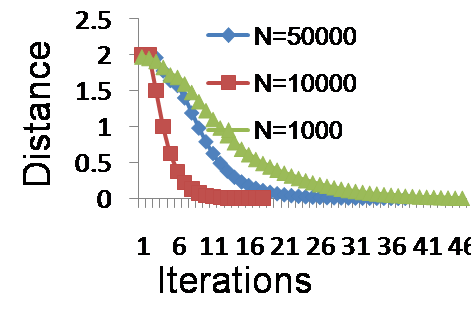}
}
\subfigure[]
{
\includegraphics[scale=0.5]{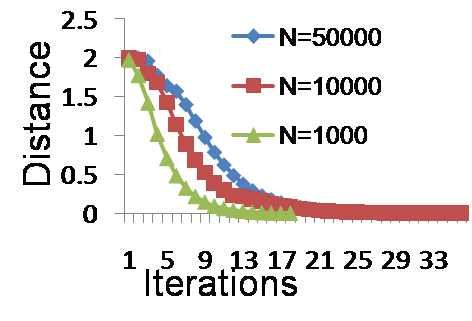}
}

\caption{The graphs are Power-law (high degree), Power-law (small degree) and random regular graph with  $\epsilon=10/n$. We can observe that the algorithm is fast and converges in few iterations.}
\label{fig:numrating}
\end{figure*}

\begin{figure*}
\centering
\subfigure[]
{
\includegraphics[scale=0.5]{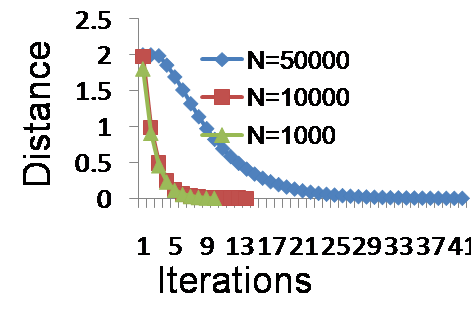}
}
\subfigure[]
{
\includegraphics[scale=0.5]{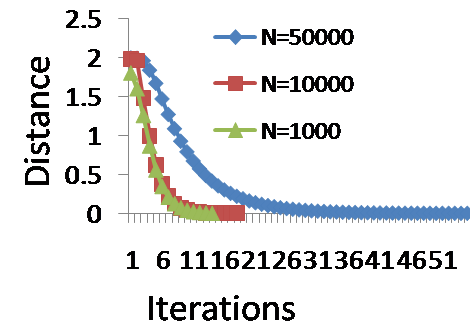}
}
\subfigure[]
{
\includegraphics[scale=0.5]{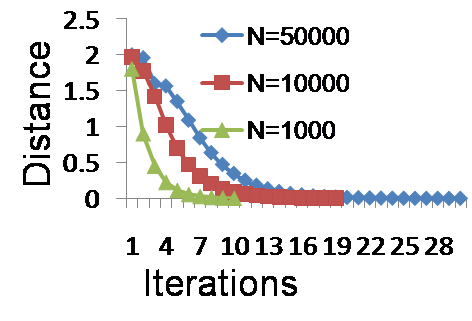}
}
\caption{The graphs are Power-law (high degree), Power-law (small degree) and random regular graph with  $\epsilon=100/n$. We can observe that the algorithm is fast and converges in few iterations.}
\label{fig:numrating}
\end{figure*}

\section{Experiments}

We experiment on synthetic data with different graphs and parameters. We choose two value of $\epsilon$: $10/n$ and $100/n$. Recall that the size of the graph needs to be more than $1/\epsilon$ for convergence. The size of the graph in our experiments is 1000 nodes, 10000 nodes and 50000 nodes. 
We generated four graphs, namely, random-regular graph where we kept the degree as 10, and two power-law graph where we start building the graph 
from low degree nodes and high degree nodes. The final graph we tested on is the cycle graph. We are particularly interested in examining the convergence rate of the algorithm. We kept the convergence criteria as $\sum_i|x^t_i-x^{t+1}_i|< 1/n$.

Fig. 3 show the result for $\epsilon=10/n$. The first chart is for power-law graph, where starting node is a high degree node. Second chart is a power-law graph with a low-degree starting node and third chart is regular-random graph with a degree 10. Figure 4 has the result $\epsilon=100/n$ and the graphs are in the same order.

In experiments, we found out that the cycle graph is slowest one and convergence takes thousands of iterations before convergence. It is not
surprising as the Markov Chains are known to converge slowly. We can relate the problem of running our algorithm on cycle graph with the random walk on a straight line.
It is well known that it takes a long time for a random walker to cross the line.

\section{Related Work}

Consensus algorithms is one of the major techniques used in data fusion \cite{conmain} and have a deep connection with Matrix Algebra \cite{matrix}, Markov chains \cite{conmain}, Distributed computing \cite{Lynch}, load balancing \cite{muthu}. Recently consensus algorithms is heavily used in sensor networks. In particular, 
many applications require a global clock synchronization,
that is all the nodes of the network need to share
a common notion of time \cite{sync1}. Therefore, clock synchronization has been an active area of research (see \cite{sync2} for a comprehensive survey).

Consensus algorithms are used in load balancing. In fact, it was one of the early methods to balance loads \cite{Cybenko}
Load balancing has shown to reduce the hot spots in the sensor networks \cite{load}. It also helps in extending the lifespan, sensor's energy might not be 
sufficient to support long communication and may require hops to forward the information \cite{load2}. The load balancing is also used frequently for routing in networks \cite{load4,load3}.

It is important to combine the data from different sensors for the purpose of event detection or other applications such as consensus \cite{fusionbook}.
Some common data fusion functions include mean distributed averaging \cite{conmain}, Kalman filter \cite{kalman}, Maximum likelihood estimator \cite{boyd} and linear least-squares estimator \cite{mean}. Nodes often collaborate for a better performance \cite{collab}. There are many applications in event detection that require data fusion \cite{collab,sensordeploy2002,fault}.

\section{Conclusion}
In this paper, we presented an algorithm that is fast, i.e., requires fewer iterations to terminate, unlike the markov chain based that converges 
asymptotically. We show that our algorithm is local and operate on a small part of graph. The algorithm fits naturally in the applications such as event detection, activity tracking, load balancing. In fact, it can complement the existing methods in data fusion as well, by making local variant by plugging the distributed fusion in Algorithm 2.

From algorithmic angle, we want to investigate the effect of {\it negative charge}, as there are fusion functions that can take negative values. 
We want to continue finding the novel applications that can take advantage of locality feature of our algorithm.

\bibliographystyle{abbrv}
\bibliography{mobicom}

\end{document}